\def\amsbb{\use@mathgroup \M@U \symAMSb}
\renewcommand\paragraph{%
   \@startsection{paragraph}{4}{0mm}%
      {-\baselineskip}%
      {.1\baselineskip}%
      {\normalfont\normalsize\bfseries}}
\newtheorem{lem}{Lemma}[section]
\newtheorem{cor}{Corollary}[section]
\DeclareMathOperator{\rk}{rk}
\DeclareMathOperator{\tr}{tr}
\DeclareMathOperator{\arctanh}{arctanh}
\newcommand{\ave}[1]{\langle #1 \rangle}
\newcommand{\ket}[1]{| \hspace{1pt} #1 \rangle}
\newcommand{\braket}[2]{\langle #1 \hspace{1pt} | \hspace{1pt} #2 \rangle}
\newcommand{\bramatket}[3]{\langle #1 \hspace{1pt} | #2 | \hspace{1pt} #3 \rangle}
\newcommand{\bramatketq}[2]{\bramatket{#1}{#2}{#1}}
\newcommand{\nbox}[2][9]{\hspace{#1pt} \mbox{#2} \hspace{#1pt}}
\newcommand{\norm}[2][]{#1| \! #1| #2 #1| \! #1|}
\newcommand{\abs}[2][]{#1| #2 #1|}
\newcommand{\tran}[0]{^\textnormal{\tiny{T}}}
\newcommand{\statpro}[2]
{
\begin{equation*} #1 \end{equation*}
\ifthenelse{\equal{\displayproofs}{1}}{\begin{proof} #2 \end{proof}}{}
}
\newcommand{\cH}{\mathcal{H}}
\newcommand{\cS}{\mathcal{S}}
\newcommand{\sH}{\mathscr{H}}
\begin{document}
\title{Entropic uncertainty from effective anti-commutators}
\author{J\k{e}drzej Kaniewski}
\email{j.kaniewski@nus.edu.sg}
\author{Marco Tomamichel}
\author{Stephanie Wehner}
\affiliation{Centre for Quantum Technologies, National University of Singapore, 3 Science Drive 2, Singapore 117543}
\date{\today}
\begin{abstract}
We investigate entropic uncertainty relations for two or more binary measurements, for example spin-$\frac{1}{2}$ or polarisation measurements. We argue that the effective anti-commutators of these measurements, i.e.\ the anti-commutators evaluated on the state prior to measuring, are an expedient measure of measurement incompatibility. Based on the knowledge of pairwise effective anti-commutators we derive a class of entropic uncertainty relations in terms of conditional R\'{e}nyi entropies. 
Our uncertainty relations are formulated in terms of effective measures of incompatibility, which can be certified device-independently. Consequently, we discuss potential applications of our findings to device-independent quantum cryptography.
Moreover, to investigate the tightness of our analysis we consider the simplest (and very well-studied) scenario of two measurements on a qubit. We find that our results outperform the celebrated bound due to Maassen and Uffink [Phys.~Rev.~Lett.~\textbf{60}, 1103 (1988)] and provide a new analytical expression for the minimum uncertainty which also outperforms some recent bounds based on majorisation.
\end{abstract}
\maketitle
\subparagraph*{Introduction.}
Uncertainty relations tell us that quantum mechanics is inherently non-deterministic, i.e.~there exist experiments whose outcomes cannot be predicted with arbitrary precision. In the usual scenario we consider two distinct measurements, giving rise to random variables $X$ and $Y$, respectively, and the statement is of the form: ``if the two measurements are \emph{incompatible} then it cannot be the case that both $X$ and $Y$ are close to being deterministic'' and the statement must hold \emph{regardless of the state of the system prior to measuring}. In other words, $X$ or $Y$ (or both) must be at least somewhat unpredictable, that is, random. To make this statement rigorous we need three ingredients: a measure of incompatibility, a measure of uncertainty and a non-trivial relation between the two.

The study of uncertainty relations began when Heisenberg~\cite{heisenberg27} and (more formally) Kennard~\cite{kennard27} noticed that it is impossible to prepare a particle whose position and momentum are sharply peaked: the more localised a particle is, the more variable its momentum becomes and vice versa. More generally, Robertson~\cite{robertson29} showed that uncertainty might arise whenever two observables do not commute. Let $\rho$ be the state of the system prior to the measurement. For an operator $A$, denote the expectation value of that operator by $\ave{A} = \tr (A \rho)$. For operators $A$ and $B$, let $[A, B] = AB - BA$ be the \emph{commutator} of $A$ and $B$ and let $\ave{[A, B]}$ be the \emph{effective commutator}. Robertson's relation reads
\begin{equation*}
\sigma_{A} \sigma_{B} \geq \frac{1}{2} \abs[\big]{\ave{[A, B]}},
\end{equation*}
where $\sigma_{X}$ is the standard deviation of $X$, $\sigma_{X}^{2} = \ave{X^{2}} - \ave{X}^{2}$. Note that this relation applies to both continuous-outcome (e.g.~position or momentum) and discrete-outcome (e.g.~spin or polarisation) measurements.

In 1930 Schr\"{o}dinger~\cite{schrodinger30} proved a 
stronger relation:
\begin{equation*}
\sigma_{A}^{2} \sigma_{B}^{2} \geq \abs[\Big]{\frac{1}{2} \ave{\{A, B\}} - \ave{A} \ave{B}}^{2} + \abs[\Big]{\frac{1}{2} \ave{[A, B]}}^{2},
\end{equation*}
where $\{A, B\} = AB + BA$ is the \emph{anti-commutator} of $A$ and $B$ and $\ave{\{A, B\}}$ is the \emph{effective anti-commutator}.

These early uncertainty relations are interesting from the foundational point of view but they suffer from two problems: a) they are not tight in some important cases (e.g.~for spin-$\frac{1}{2}$ particle with $A = \sigma_{Z}$, $B = \sigma_{X}$ and $\rho = \frac{\mathbb{1}}{2}$
the right-hand side is $0$, despite both outcomes being maximally random, $\sigma_{A} = \sigma_{B} = 1$) and b) their applications are limited because the standard deviation is not always a suitable measure of uncertainty.

To find uncertainty relations with applications in information theory and cryptography, entropies were employed as measures of uncertainty. Usually one considers a scenario in which we have a certain number of measurements and perform one of them uniformly at random. If we store the label of the measurement in $K$ and the measurement outcome in $X$ we obtain a joint probability distribution $P_{XK}$. Entropic uncertainty relations are simply lower bounds on a particular conditional entropy, $H(X | K)$, evaluated on the probability distribution $P_{XK}$.

The first entropic uncertainty relation was proved for position and momentum of an infinite dimensional system in 1975~\cite{beckner75, bialynicki-birula75} and arguably the most celebrated result
came in 1988~\cite{maassen88}. It states that for two projective rank-1 measurements on a $d$-dimensional, described by measurement eigenvectors $\{\ket{x_{j}}\}_{j \in [d]}$ and $\{\ket{y_{j}}\}_{j \in [d]}$, we have
\begin{equation*}
H(X | K) \geq - \frac{1}{2} \log c,
\end{equation*}
where $H(X | K)$ is the conditional Shannon entropy and $c := \max_{j, k} \abs{\braket{x_{j}}{y_{k}}}^{2}$ is the \emph{overlap} of the two measurements (note that this is independent of the state $\rho$ prior to measurement). Entropic uncertainty relations became an active topic of research since entropies give operational meaning to the notion of uncertainty and thus find applications in many information processing and cryptographic tasks (see~\cite{wehner10} for a recent review).

The authors of~\cite{wehner08} considered a set of binary observables that pairwise anti-commute (as operators) and they found that such measurements give rise to strong entropic uncertainty relations. While the case of perfect anti-commutation is well understood nothing is known about the case of partial (or approximate) anti-commutation. Since for most applications we need uncertainty relations which are robust against small perturbations we turn to study observables which only partially anti-commute as quantified by effective anti-commutators.
\subparagraph*{Results and outline.}
In this paper we prove uncertainty relations for an arbitrary set of binary observables (as usual we associate their outcomes with values $\pm 1$). Given the knowledge of their pairwise effective anti-commutators (cf.~\eqref{eq:def-anti-comm}) we derive lower bounds on conditional R\'{e}nyi entropies (cf.~\eqref{eq:main-result-1} and \eqref{eq:main-result-2}) in two steps. In the first step we show that fixing the effective anti-commutators imposes a simple geometric constraint on the expectation values of these observables (note that a probability distribution with two outcomes is fully characterised by its expectation value). In the second step we show that the constraint on expectation values implies a lower bound on entropic uncertainty.

Our relations have two desirable features.
First, our measure of incompatibility is effective (state-dependent) and it can be certified experimentally based on ideas of Mayers and Yao~\cite{mayers98, mayers04}, which leads to \emph{device-independent uncertainty}. (Note that non-effective measures, like the overlap commonly used in entropic uncertainty relations, cannot be certified and so we can only employ these relations when the device is trusted.) Secondly, we can treat any (finite) number of observables. This is because we do not rely on a standard technique based on a reduction to qubits (Jordan's lemma) which only works for two observables, but instead use the full anti-commutation structure of the set of observables.

We compare our results with existing bounds for the case of the Shannon entropy of two measurements. In particular, we improve on the celebrated Maassen-Uffink bound by providing an analytical bound that is strictly stronger for all non-trivial overlaps. We conclude the paper with a discussion of potential applications to device-independent quantum cryptography.
\subparagraph*{Techniques.}
A binary measurement consists of two positive semi-definite operators, $F_{+}, F_{-} \geq 0$, that add up to identity, $F_{+} + F_{-} = \mathbb{1}$. If we associate the outcomes with values $\pm 1$ then the measurement can be written compactly as a binary \emph{observable}, $A = F_{+} - F_{-}$, which satisfies $- \mathbb{1} \leq A \leq \mathbb{1}$.

Suppose we are given a state, $\rho$, and a set of $M$ binary observables, $\{A_{j}\}_{j \in [M]}$. Define the effective anti-commutator between the $j$-th and the $k$-th observable as 
\begin{equation}
  \label{eq:def-anti-comm}
  \varepsilon_{jk} = \frac{\ave{\{A_{j}, A_{k}\}}}2 = \frac{\tr(\{A_{j}, A_{k}\} \rho)}2
\end{equation}
and note that $\varepsilon_{jk}$ is real and $\abs{\varepsilon_{jk}} \leq 1$. Let $T$ be the \emph{anti-commutation matrix}, $[T]_{jk} = \varepsilon_{jk}$. For ease of presentation in the main paper we focus on projective observables, for which $[T]_{jj} = 1$ for all $j$. For a more general proof, which also covers generalised measurements, please refer to Section~\ref{sec:ellipsoid} of the Supplemental Material (SM). Let $g_{j} = \ave{A_{j}}$ be the expectation value of the $j$-th observable. For binary observables the probability distribution of interest (as described in the introduction) can be written as
\begin{equation}
\label{eq:probability-distribution}
\Pr[X = x, K = k] = \frac{1}{M} \cdot \frac{1}{2} \big(1 + (-1)^{x} g_{k} \big).
\end{equation}
The conditional R\'{e}nyi entropy~\cite{arimoto77} of order $\alpha > 1$ is defined as
\begin{equation}
\label{eq:entropy-definition}
H_{\alpha}(X | K) := \frac{\alpha}{1 - \alpha} \log \sum_{k} p_{k} \Big(\sum_{x} p_{x | k}^{\alpha} \Big)^{1/\alpha}
\end{equation}
while the Shannon entropy equals $H(X | K) := \lim_{\alpha \to 1} H_{\alpha}(X | K)$. The goal is to prove lower bounds on $H_{\alpha}(X | K)$ and $H(X | K)$ (this is what we want) evaluated on the joint probability distribution~\eqref{eq:probability-distribution} based on the knowledge of $T$ (this is what we are given) and we do it in two steps. First, we show that $T$ imposes a geometric condition on the expectation values of the observables. Then, we use this geometric condition to prove lower bounds on entropic uncertainty.

Let $g = (g_{1}, g_{2}, \ldots, g_{M})$ be a (column) vector composed of expectation values. Clearly, $g$ lies inside the $(\pm 1)$-hypercube, $g \in [-1,1]^{M}$, but we show that $T$ imposes an extra geometric constraint on $g$. For this purpose, let $a$ be an arbitrary real unit vector, $a \in [-1, 1]^{M}$, and let $K = \sum_{j} a_{j} A_{j}$.
Then
\begin{equation*}
K^{2} = \mathbb{1} + \frac{1}{2} \sum_{j \neq k} a_{j} a_{k} \{A_{j}, A_{k}\}.
\end{equation*}
For arbitrary operators the Cauchy-Schwarz inequality ensures that $[\tr (X^{\dagger} Y)]^{2} \leq \tr (X^{\dagger} X) \cdot \tr (Y^{\dagger} Y)$. By setting $X = K \sqrt{\rho}$ and $Y = \sqrt{\rho}$ we find that
\begin{equation*}
a \tran g g \tran a \leq a \tran T a.
\end{equation*}
Since this inequality holds for all choices of $a$, it is equivalent to the operator inequality
\begin{equation}
\label{eq:ellipsoid}
g g \tran \leq T.
\end{equation}
This constraint admits an appealing geometrical interpretation: the matrix $T$ defines an ellipsoid within the hypercube and the constraint restricts the vector $g$ to lie inside that ellipsoid (see FIG.~\ref{fig:ellipses} for an example).
\begin{figure}[h]
	\hspace{-0.5cm}%
	\includegraphics{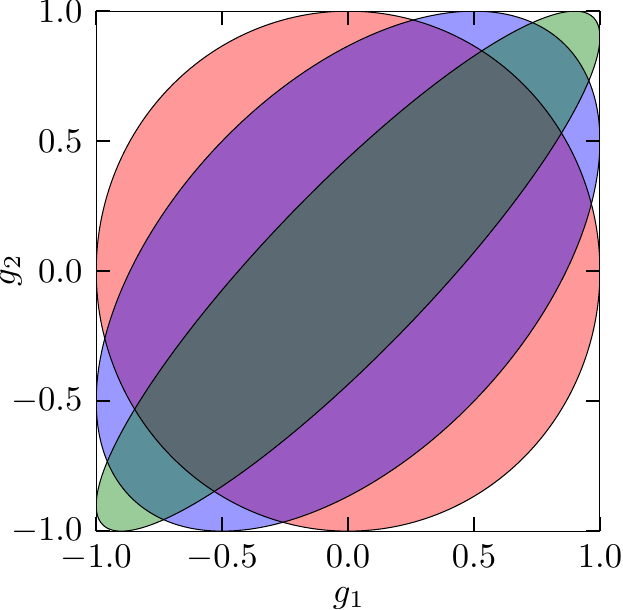}
	\caption{The allowed expectation values of two observables with a fixed effective anti-commutator, $\varepsilon \in \{0, 0.5, 0.9\}$. For $\varepsilon = 0$ we get a circle, which becomes gradually elongated towards the corners as $\varepsilon$ increases. Note that $\varepsilon > 0$ ($\varepsilon < 0$) forces the two expectation values to be correlated (anti-correlated), which results in an ellipse lying along the primary (secondary) diagonal. The deterministic points, corresponding to the corners, are only allowed for $\abs{\varepsilon} = 1$.}
	\label{fig:ellipses}
\end{figure}
Moreover, an extension of the construction from~\cite{tsirelson80} (Section~\ref{sec:ellipsoid} of the SM) shows that this characterisation is tight: a vector of expectation values $g$ and an anti-commutation matrix $T$ are compatible iff~\eqref{eq:ellipsoid} holds.

To find lower bounds on a particular entropy ($H_{\alpha}(X | K)$ or $H(X | K)$) we just need to minimise it over the allowed set of expectation values. Note that for the probability distribution~\eqref{eq:probability-distribution} the expression~\eqref{eq:entropy-definition} simplifies to
\begin{equation*}
H_{\alpha}(X | K) = \frac{\alpha}{1 - \alpha} \log \frac{\sum_{k} w_{\alpha}(g_{k})}{M},
\end{equation*}
where $w_{\alpha}(g) = \big[ \big( \frac{1 + g}{2} \big)^{\alpha} + \big( \frac{1 - g}{2} \big)^{\alpha} \big]^{1/\alpha}$. Now, the task is to minimise $H_{\alpha}(X | K)$ over the ellipsoid, or, equivalently, to solve
\begin{equation*}
\textnormal{max:}\ \sum_{k} w_{\alpha}(g_{k}) \nbox{s.t.} g g \tran \leq T.
\end{equation*}
Unfortunately, this seemingly natural task turns out to be rather difficult even in the simplest cases. Therefore, we consider a relaxation of the problem, in which we optimise over a sphere whose radius is determined by the largest semi-axis of the ellipsoid, denoted by $r$ (see FIG.~\ref{fig:relaxation} for an example). Note that $r = \norm{T}$, the spectral norm of $T$.
\begin{figure}[h]
	\hspace{-0.5cm}%
	\includegraphics{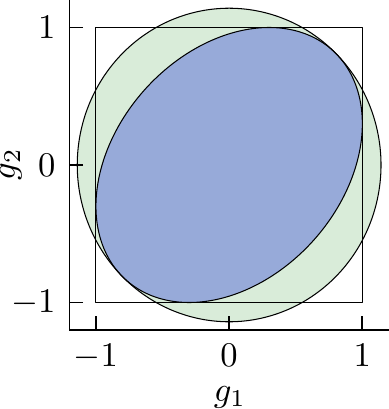}
	\includegraphics{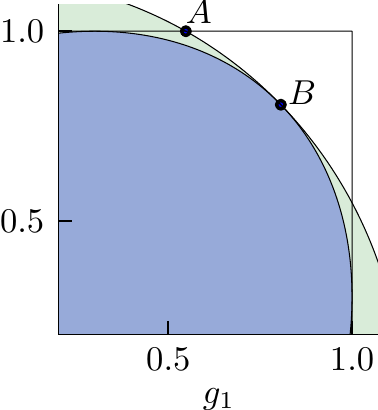}
	\caption{The spherical relaxation for two measurements with $\varepsilon = 0.3$. Optimisation is performed over a circle (light colour) rather than an ellipse (dark colour). Points $A$ and $B$ are the optimal solutions to the relaxed optimisation problem~\eqref{eq:relaxed-problem-2} for convex ($\alpha \in (1, \frac{3}{2}]$) and concave ($\alpha \in [2, \infty)$) functions, respectively.}
	\label{fig:relaxation}
\end{figure}
\begin{equation}
\label{eq:relaxed-problem-1}
\textnormal{max:} \sum_{k} w_{\alpha}(g_{k}) \nbox[6]{s.t.} g \in [-1, 1]^{M}, \ \sum_{k} g_{k}^{2} \leq r.
\end{equation}
Note that we added the hypercube constraints explicitly since it is not implied by the relaxed, spherical constraint. This approach has the advantage that it compresses the whole anti-commutation matrix into just one number\,---\,its norm. More importantly, the relaxed problem can be solved analytically for most values of $\alpha$ as explained below.

Since neither the objective function nor the constraints of~\eqref{eq:relaxed-problem-1} depend on the sign of $g_{k}$ we can restrict ourselves to non-negative expectation values. This allows us to define $t_{k} = g_{k}^{2}$ and the problem becomes
\begin{equation}
\label{eq:relaxed-problem-2}
\textnormal{max:}\ \sum_{y} w_{\alpha}(\sqrt{t_{k}}) \nbox[6]{s.t.} t \in [0, 1]^{M}, \ \sum_{k} t_{k} \leq r.
\end{equation}
Since the objective function is monotone we can assert that the optimal solution satisfies $\sum_{k} t_{k} = r$.

For $\alpha \in (1, \frac{3}{2}]$ the function $w_{\alpha}(\sqrt{t})$ is convex in $t$ (Section~\ref{sec:conv-conc} of the SM) and since the maximum of a convex function over a convex set is achieved at an extremal point, the optimal value must be achieved at an assignment of the form
\begin{equation}
\label{eq:optimal-assignment}
t_{k} =
\begin{cases}
1 &\nbox{for} 1 \leq k \leq \lfloor r \rfloor,\\
r - \lfloor r \rfloor &\nbox{for} k = \lfloor r \rfloor + 1,\\
0 &\nbox{otherwise.}
\end{cases}
\end{equation}
Hence, for $\alpha \in (1, \frac{3}{2}]$ we arrive at the following bound, which constitutes our main result:
\begin{equation}
\label{eq:main-result-1}
\begin{split}
H_{\alpha}(X | K) \geq H_{\alpha}(Y | K), \nbox[8]{where}\\
\Pr[Y = y, K = k] = \frac{1}{M} \cdot \frac{1}{2} \big( 1 + (-1)^{y} \sqrt{t_{k}}\, \big)
\end{split}
\end{equation}
and $t_{k}$ refers to the optimal assignment~\eqref{eq:optimal-assignment}. This can be extended to the Shannon entropy by taking the limit of $\alpha \to 1$ yielding $H(X | K) \geq H(Y | K)$.

For $\alpha \in [2, \infty)$ the function $w_{\alpha}(\sqrt{t})$ is concave and since it is also symmetric the minimum is achieved for $t_{k} = \frac{r}{M}$ for all $k$. Therefore, we have
\begin{equation}
\label{eq:main-result-2}
\begin{split}
H_{\alpha}(X | K) \geq H_{\alpha}(Y), \nbox[8]{where}\\
\Pr[Y = y] = \frac{1}{2} \bigg( 1 + (-1)^{y} \sqrt{\frac{r}{M}}\, \bigg).
\end{split}
\end{equation}
Note that in both cases these bounds are functions of $M$ and $r$ only and, hence, can be computed easily.
\subparagraph*{Comparison with existing bounds.}
Although effective anti-commutators play a central role in our work, it is more common to state uncertainty relations in terms of the overlap. Let us consider two projective rank-1 measurements on a qubit and the conditional Shannon entropy that arises. We look for bounds of the form $H(X | K) \geq q(c)$ and, as stated in the introduction, the celebrated result of Maassen and Uffink~\cite{maassen88} reads
\begin{equation*}
q_{\textnormal{MU}}(c) = - \frac{1}{2} \log c.
\end{equation*}
While this is known to be tight for the extreme values of the overlap, $c \in \{\frac{1}{2}, 1\}$, it is not tight in the interior. It turns out that our results might be applied to this case to give an improvement for all intermediate values of $c$. We take advantage of the fact that for projective measurements on a qubit there is a one-to-one mapping between the effective anti-commutator and the overlap, $c = (1 + \abs{\varepsilon})/2$. Therefore, we can formulate our bound~\eqref{eq:main-result-1} as a function of the overlap
\begin{equation*}
h \bigg( \frac{1 + \sqrt{\abs{\varepsilon}}}{2} \bigg) = h \bigg( \frac{1 + \sqrt{2c - 1}}{2} \bigg) = q_{\textnormal{ac}}(c),
\end{equation*}
where $h(p) = - p \log p - (1 - p) \log (1 - p)$ is the binary entropy.
Moreover, we compare these bounds with a bound recently developed using a majorisation technique~\cite{friedland13, puchala13} (and very recently~\cite{rudnicki14}), denoted $q_{\textnormal{maj}}(c)$, and the largest state-independent lower bound, denoted $q_{\textnormal{opt}}(c)$. (For $c \gtrsim 0.7$ there is an analytic expression for $q_{\textnormal{opt}}$ due to Ghirardi et al.~\cite{ghirardi03}, while for $c \lesssim 0.7$ one needs to resort to numerics.)
\begin{figure}[h]
\includegraphics{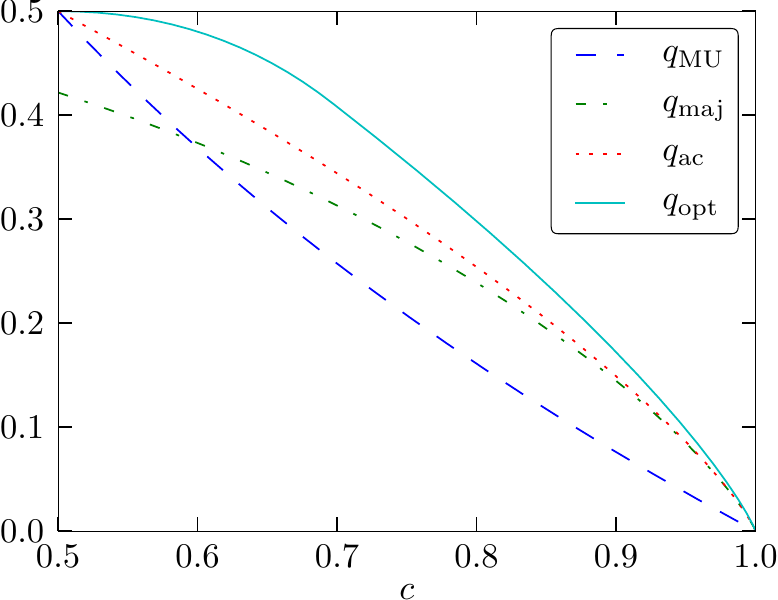}
\caption{Comparison of various lower bounds on $H(X | K)$ as a function of the overlap, $c$.
}
\label{fig:comparison}
\end{figure}
\subparagraph*{Applications to quantum cryptography.}
Recently, in the context of quantum cryptography, there has been a lot of interest in self-testing~\cite{mayers98, mayers04, mckague11} and device-independent security~\cite{acin06}. In self-testing the task is to characterise the internal working of a device 
by analysing observed correlations alone. This characterisation then allows to prove security of a cryptographic protocol executed using that device.
(The term device-independent comes from the fact that we did not assume how the device works but we deduced it from the statistics.)

Uncertainty relations constitute an important ingredient of many device-independent security proofs (see~\cite{lim13} for an example in quantum key distribution and~\cite{miller14} for a very recent example in randomness expansion). An interesting development would be to prove device-independent security for two-party cryptography, for example in the bounded~\cite{damgard05, damgard07} or noisy~\cite{wehner08a, konig12} storage model. (In case of trusted devices, security based on uncertainty relations was proved in the bounded storage model~\cite{ng12} and for relativistic bit commitment~\cite{kaniewski13}.)

Our results fit into this framework since we derive uncertainty from effective anti-commutators, which can be certified experimentally. To certify effective anti-commutation between two observables it is enough to observe Clauser-Horne-Shimony-Holt (CHSH) violation (see, e.g.~\cite{tomamichel13a}). To extend this result to multiple observables we resort to a game proposed by Slofstra~\cite{slofstra11}, which can be seen as a combination of multiple CHSH games in which one of the parties is not told which particular subgame they are playing (see Section~\ref{sec:certification-procedure} of the SM for details). This testing procedure produces bounds on the effective anti-commutator of every pair of observables, which implies an upper bound on the norm of the anti-commutation matrix, $r$. Then, we use \eqref{eq:main-result-1} and \eqref{eq:main-result-2} to obtain explicit entropic bounds, hence, leading us to \emph{device-independent uncertainty}.
\subparagraph*{Conclusion.} Drawing from early uncertainty relations we have shown that it is possible to derive entropic uncertainty relations for binary observables from effective anti-commutation. The effective anti-commutators seem to be a natural object to study and give rise to strong uncertainty relations. Moreover, since they can be certified (self-tested) our uncertainty relations are expected to have applications in device-independent cryptography. Investigating these potential applications is the most interesting open question arising from our research. Another, more foundational line of research could investigate whether our approach can be extended to allow for quantum side information.
\subparagraph*{Acknowledgements.}
This work is funded by the Ministry of Education (MOE) and National Research Foundation Singapore, as well as MOE Tier 3 Grant "Random numbers from quantum processes" (MOE2012-T3-1-009).
We thank Thomas Vidick and Matthew McKague for useful discussions.
\bibliographystyle{apsrev4-1}
\bibliography{librarysan}

\begin{thebibliography}{29}%
\makeatletter
\providecommand \@ifxundefined [1]{%
 \@ifx{#1\undefined}
}%
\providecommand \@ifnum [1]{%
 \ifnum #1\expandafter \@firstoftwo
 \else \expandafter \@secondoftwo
 \fi
}%
\providecommand \@ifx [1]{%
 \ifx #1\expandafter \@firstoftwo
 \else \expandafter \@secondoftwo
 \fi
}%
\providecommand \natexlab [1]{#1}%
\providecommand \enquote  [1]{``#1''}%
\providecommand \bibnamefont  [1]{#1}%
\providecommand \bibfnamefont [1]{#1}%
\providecommand \citenamefont [1]{#1}%
\providecommand \href@noop [0]{\@secondoftwo}%
\providecommand \href [0]{\begingroup \@sanitize@url \@href}%
\providecommand \@href[1]{\@@startlink{#1}\@@href}%
\providecommand \@@href[1]{\endgroup#1\@@endlink}%
\providecommand \@sanitize@url [0]{\catcode `\\12\catcode `\$12\catcode
  `\&12\catcode `\#12\catcode `\^12\catcode `\_12\catcode `\%12\relax}%
\providecommand \@@startlink[1]{}%
\providecommand \@@endlink[0]{}%
\providecommand \url  [0]{\begingroup\@sanitize@url \@url }%
\providecommand \@url [1]{\endgroup\@href {#1}{\urlprefix }}%
\providecommand \urlprefix  [0]{URL }%
\providecommand \Eprint [0]{\href }%
\providecommand \doibase [0]{http://dx.doi.org/}%
\providecommand \selectlanguage [0]{\@gobble}%
\providecommand \bibinfo  [0]{\@secondoftwo}%
\providecommand \bibfield  [0]{\@secondoftwo}%
\providecommand \translation [1]{[#1]}%
\providecommand \BibitemOpen [0]{}%
\providecommand \bibitemStop [0]{}%
\providecommand \bibitemNoStop [0]{.\EOS\space}%
\providecommand \EOS [0]{\spacefactor3000\relax}%
\providecommand \BibitemShut  [1]{\csname bibitem#1\endcsname}%
\let\auto@bib@innerbib\@empty
\bibitem [{\citenamefont {Heisenberg}(1927)}]{heisenberg27}%
  \BibitemOpen
  \bibfield  {author} {\bibinfo {author} {\bibfnamefont {W.}~\bibnamefont
  {Heisenberg}},\ }\href {\doibase 10.1007/BF01397280} {\bibfield  {journal}
  {\bibinfo  {journal} {Zeitschrift f\"{u}r Physik}\ }\textbf {\bibinfo
  {volume} {43}},\ \bibinfo {pages} {172} (\bibinfo {year} {1927})}\BibitemShut
  {NoStop}%
\bibitem [{\citenamefont {Kennard}(1927)}]{kennard27}%
  \BibitemOpen
  \bibfield  {author} {\bibinfo {author} {\bibfnamefont {E.~H.}\ \bibnamefont
  {Kennard}},\ }\href {\doibase 10.1007/BF01391200} {\bibfield  {journal}
  {\bibinfo  {journal} {Zeitschrift f\"{u}r Physik}\ }\textbf {\bibinfo
  {volume} {44}},\ \bibinfo {pages} {326} (\bibinfo {year} {1927})}\BibitemShut
  {NoStop}%
\bibitem [{\citenamefont {Robertson}(1929)}]{robertson29}%
  \BibitemOpen
  \bibfield  {author} {\bibinfo {author} {\bibfnamefont {H.~P.}\ \bibnamefont
  {Robertson}},\ }\href {\doibase 10.1103/PhysRev.34.163} {\bibfield  {journal}
  {\bibinfo  {journal} {Phys. Rev.}\ }\textbf {\bibinfo {volume} {34}},\
  \bibinfo {pages} {163} (\bibinfo {year} {1929})}\BibitemShut {NoStop}%
\bibitem [{\citenamefont {Schr\"{o}dinger}(1930)}]{schrodinger30}%
  \BibitemOpen
  \bibfield  {author} {\bibinfo {author} {\bibfnamefont {E.}~\bibnamefont
  {Schr\"{o}dinger}},\ }\href@noop {} {\bibfield  {journal} {\bibinfo
  {journal} {Sitzungsberichte der Preu\ss ischen Akademie der Wissenschaften,
  Physikalisch-mathematische Klasse}\ }\textbf {\bibinfo {volume} {14}},\
  \bibinfo {pages} {296} (\bibinfo {year} {1930})}\BibitemShut {NoStop}%
\bibitem [{\citenamefont {Beckner}(1975)}]{beckner75}%
  \BibitemOpen
  \bibfield  {author} {\bibinfo {author} {\bibfnamefont {W.}~\bibnamefont
  {Beckner}},\ }\href@noop {} {\bibfield  {journal} {\bibinfo  {journal} {Ann.
  Math.}\ }\textbf {\bibinfo {volume} {102}},\ \bibinfo {pages} {159} (\bibinfo
  {year} {1975})}\BibitemShut {NoStop}%
\bibitem [{\citenamefont {Bia{\l}ynicki-Birula}\ and\ \citenamefont
  {Mycielski}(1975)}]{bialynicki-birula75}%
  \BibitemOpen
  \bibfield  {author} {\bibinfo {author} {\bibfnamefont {I.}~\bibnamefont
  {Bia{\l}ynicki-Birula}}\ and\ \bibinfo {author} {\bibfnamefont
  {J.}~\bibnamefont {Mycielski}},\ }\href {\doibase 10.1007/BF01608825}
  {\bibfield  {journal} {\bibinfo  {journal} {Comm. Math. Phys.}\ }\textbf
  {\bibinfo {volume} {44}},\ \bibinfo {pages} {129} (\bibinfo {year}
  {1975})}\BibitemShut {NoStop}%
\bibitem [{\citenamefont {Maassen}\ and\ \citenamefont
  {Uffink}(1988)}]{maassen88}%
  \BibitemOpen
  \bibfield  {author} {\bibinfo {author} {\bibfnamefont {H.}~\bibnamefont
  {Maassen}}\ and\ \bibinfo {author} {\bibfnamefont {J.~B.~M.}\ \bibnamefont
  {Uffink}},\ }\href {\doibase 10.1103/PhysRevLett.60.1103} {\bibfield
  {journal} {\bibinfo  {journal} {Phys. Rev. Lett.}\ }\textbf {\bibinfo
  {volume} {60}},\ \bibinfo {pages} {1103} (\bibinfo {year}
  {1988})}\BibitemShut {NoStop}%
\bibitem [{\citenamefont {Wehner}\ and\ \citenamefont
  {Winter}(2010)}]{wehner10}%
  \BibitemOpen
  \bibfield  {author} {\bibinfo {author} {\bibfnamefont {S.}~\bibnamefont
  {Wehner}}\ and\ \bibinfo {author} {\bibfnamefont {A.}~\bibnamefont
  {Winter}},\ }\href {\doibase 10.1088/1367-2630/12/2/025009} {\bibfield
  {journal} {\bibinfo  {journal} {New J. Phys.}\ }\textbf {\bibinfo {volume}
  {12}},\ \bibinfo {pages} {025009} (\bibinfo {year} {2010})}\BibitemShut
  {NoStop}%
\bibitem [{\citenamefont {Wehner}\ and\ \citenamefont
  {Winter}(2008)}]{wehner08}%
  \BibitemOpen
  \bibfield  {author} {\bibinfo {author} {\bibfnamefont {S.}~\bibnamefont
  {Wehner}}\ and\ \bibinfo {author} {\bibfnamefont {A.}~\bibnamefont
  {Winter}},\ }\href {\doibase 10.1063/1.2943685} {\bibfield  {journal}
  {\bibinfo  {journal} {J. Math. Phys.}\ }\textbf {\bibinfo {volume} {49}},\
  \bibinfo {pages} {062105} (\bibinfo {year} {2008})}\BibitemShut {NoStop}%
\bibitem [{\citenamefont {Mayers}\ and\ \citenamefont {Yao}(1998)}]{mayers98}%
  \BibitemOpen
  \bibfield  {author} {\bibinfo {author} {\bibfnamefont {D.}~\bibnamefont
  {Mayers}}\ and\ \bibinfo {author} {\bibfnamefont {A.}~\bibnamefont {Yao}},\
  }\href {\doibase 10.1109/SFCS.1998.743501} {\bibfield  {journal} {\bibinfo
  {journal} {Proc. 39th IEEE FOCS}\ ,\ \bibinfo {pages} {503}} (\bibinfo {year}
  {1998})}\BibitemShut {NoStop}%
\bibitem [{\citenamefont {Mayers}\ and\ \citenamefont {Yao}(2004)}]{mayers04}%
  \BibitemOpen
  \bibfield  {author} {\bibinfo {author} {\bibfnamefont {D.}~\bibnamefont
  {Mayers}}\ and\ \bibinfo {author} {\bibfnamefont {A.}~\bibnamefont {Yao}},\
  }\href {http://www.rintonpress.com/xqic4/qic-4-4/273-286.pdf} {\bibfield
  {journal} {\bibinfo  {journal} {Quant. Inf. Comp.}\ }\textbf {\bibinfo
  {volume} {4}},\ \bibinfo {pages} {273} (\bibinfo {year} {2004})}\BibitemShut
  {NoStop}%
\bibitem [{\citenamefont {Arimoto}(1977)}]{arimoto77}%
  \BibitemOpen
  \bibfield  {author} {\bibinfo {author} {\bibfnamefont {S.}~\bibnamefont
  {Arimoto}},\ }\href@noop {} {\bibfield  {journal} {\bibinfo  {journal}
  {Topics in Inf. Theory}\ }\textbf {\bibinfo {volume} {17}},\ \bibinfo {pages}
  {41} (\bibinfo {year} {1977})}\BibitemShut {NoStop}%
\bibitem [{\citenamefont {Tsirelson}(1980)}]{tsirelson80}%
  \BibitemOpen
  \bibfield  {author} {\bibinfo {author} {\bibfnamefont {B.~S.}\ \bibnamefont
  {Tsirelson}},\ }\href {\doibase 10.1007/BF00417500} {\bibfield  {journal}
  {\bibinfo  {journal} {Lett. Math. Phys.}\ }\textbf {\bibinfo {volume} {4}},\
  \bibinfo {pages} {93} (\bibinfo {year} {1980})}\BibitemShut {NoStop}%
\bibitem [{\citenamefont {Friedland}\ \emph {et~al.}(2013)\citenamefont
  {Friedland}, \citenamefont {Gheorghiu},\ and\ \citenamefont
  {Gour}}]{friedland13}%
  \BibitemOpen
  \bibfield  {author} {\bibinfo {author} {\bibfnamefont {S.}~\bibnamefont
  {Friedland}}, \bibinfo {author} {\bibfnamefont {V.}~\bibnamefont
  {Gheorghiu}}, \ and\ \bibinfo {author} {\bibfnamefont {G.}~\bibnamefont
  {Gour}},\ }\href {\doibase 10.1103/PhysRevLett.111.230401} {\bibfield
  {journal} {\bibinfo  {journal} {Phys. Rev. Lett.}\ }\textbf {\bibinfo
  {volume} {111}},\ \bibinfo {pages} {230401} (\bibinfo {year}
  {2013})}\BibitemShut {NoStop}%
\bibitem [{\citenamefont {Pucha{\l}a}\ \emph {et~al.}(2013)\citenamefont
  {Pucha{\l}a}, \citenamefont {Rudnicki},\ and\ \citenamefont
  {{\.Z}yczkowski}}]{puchala13}%
  \BibitemOpen
  \bibfield  {author} {\bibinfo {author} {\bibfnamefont {Z.}~\bibnamefont
  {Pucha{\l}a}}, \bibinfo {author} {\bibfnamefont {{\L}.}~\bibnamefont
  {Rudnicki}}, \ and\ \bibinfo {author} {\bibfnamefont {K.}~\bibnamefont
  {{\.Z}yczkowski}},\ }\href {\doibase 10.1088/1751-8113/46/27/272002}
  {\bibfield  {journal} {\bibinfo  {journal} {J. Phys. A: Math. Theor.}\
  }\textbf {\bibinfo {volume} {46}},\ \bibinfo {pages} {272002} (\bibinfo
  {year} {2013})}\BibitemShut {NoStop}%
\bibitem [{\citenamefont {Rudnicki}\ \emph {et~al.}(2014)\citenamefont
  {Rudnicki}, \citenamefont {Pucha{\l}a},\ and\ \citenamefont
  {{\.Z}yczkowski}}]{rudnicki14}%
  \BibitemOpen
  \bibfield  {author} {\bibinfo {author} {\bibfnamefont {{\L}.}~\bibnamefont
  {Rudnicki}}, \bibinfo {author} {\bibfnamefont {Z.}~\bibnamefont
  {Pucha{\l}a}}, \ and\ \bibinfo {author} {\bibfnamefont {K.}~\bibnamefont
  {{\.Z}yczkowski}},\ }\href@noop {} {\  (\bibinfo {year} {2014})},\ \Eprint
  {http://arxiv.org/abs/1402.0129v1} {arXiv:1402.0129v1} \BibitemShut {NoStop}%
\bibitem [{\citenamefont {Ghirardi}\ \emph {et~al.}(2003)\citenamefont
  {Ghirardi}, \citenamefont {Marinatto},\ and\ \citenamefont
  {Romano}}]{ghirardi03}%
  \BibitemOpen
  \bibfield  {author} {\bibinfo {author} {\bibfnamefont {G.}~\bibnamefont
  {Ghirardi}}, \bibinfo {author} {\bibfnamefont {L.}~\bibnamefont {Marinatto}},
  \ and\ \bibinfo {author} {\bibfnamefont {R.}~\bibnamefont {Romano}},\ }\href
  {\doibase 10.1016/j.physleta.2003.08.029} {\bibfield  {journal} {\bibinfo
  {journal} {Phys. Lett. A}\ }\textbf {\bibinfo {volume} {317}},\ \bibinfo
  {pages} {32} (\bibinfo {year} {2003})}\BibitemShut {NoStop}%
\bibitem [{\citenamefont {McKague}\ and\ \citenamefont
  {Mosca}(2011)}]{mckague11}%
  \BibitemOpen
  \bibfield  {author} {\bibinfo {author} {\bibfnamefont {M.}~\bibnamefont
  {McKague}}\ and\ \bibinfo {author} {\bibfnamefont {M.}~\bibnamefont
  {Mosca}},\ }\href {\doibase 10.1007/978-3-642-18073-6\_10} {\bibfield
  {journal} {\bibinfo  {journal} {Proc. 5th TQC}\ ,\ \bibinfo {pages} {113}}
  (\bibinfo {year} {2011})}\BibitemShut {NoStop}%
\bibitem [{\citenamefont {Ac\'{\i}n}\ \emph {et~al.}(2006)\citenamefont
  {Ac\'{\i}n}, \citenamefont {Gisin},\ and\ \citenamefont {Masanes}}]{acin06}%
  \BibitemOpen
  \bibfield  {author} {\bibinfo {author} {\bibfnamefont {A.}~\bibnamefont
  {Ac\'{\i}n}}, \bibinfo {author} {\bibfnamefont {N.}~\bibnamefont {Gisin}}, \
  and\ \bibinfo {author} {\bibfnamefont {L.}~\bibnamefont {Masanes}},\ }\href
  {\doibase 10.1103/PhysRevLett.97.120405} {\bibfield  {journal} {\bibinfo
  {journal} {Phys. Rev. Lett.}\ }\textbf {\bibinfo {volume} {97}},\ \bibinfo
  {pages} {120405} (\bibinfo {year} {2006})}\BibitemShut {NoStop}%
\bibitem [{\citenamefont {Lim}\ \emph {et~al.}(2013)\citenamefont {Lim},
  \citenamefont {Portmann}, \citenamefont {Tomamichel}, \citenamefont
  {Renner},\ and\ \citenamefont {Gisin}}]{lim13}%
  \BibitemOpen
  \bibfield  {author} {\bibinfo {author} {\bibfnamefont {C.~C.~W.}\
  \bibnamefont {Lim}}, \bibinfo {author} {\bibfnamefont {C.}~\bibnamefont
  {Portmann}}, \bibinfo {author} {\bibfnamefont {M.}~\bibnamefont
  {Tomamichel}}, \bibinfo {author} {\bibfnamefont {R.}~\bibnamefont {Renner}},
  \ and\ \bibinfo {author} {\bibfnamefont {N.}~\bibnamefont {Gisin}},\ }\href
  {\doibase 10.1103/PhysRevX.3.031006} {\bibfield  {journal} {\bibinfo
  {journal} {Phys. Rev. X}\ }\textbf {\bibinfo {volume} {3}},\ \bibinfo {pages}
  {031006} (\bibinfo {year} {2013})}\BibitemShut {NoStop}%
\bibitem [{\citenamefont {Miller}\ and\ \citenamefont {Shi}(2014)}]{miller14}%
  \BibitemOpen
  \bibfield  {author} {\bibinfo {author} {\bibfnamefont {C.~A.}\ \bibnamefont
  {Miller}}\ and\ \bibinfo {author} {\bibfnamefont {Y.}~\bibnamefont {Shi}},\
  }\href@noop {} {\  (\bibinfo {year} {2014})},\ \Eprint
  {http://arxiv.org/abs/1402.0489v1} {arXiv:1402.0489v1} \BibitemShut {NoStop}%
\bibitem [{\citenamefont {Damg{\aa}rd}\ \emph {et~al.}(2005)\citenamefont
  {Damg{\aa}rd}, \citenamefont {Fehr}, \citenamefont {Salvail},\ and\
  \citenamefont {Schaffner}}]{damgard05}%
  \BibitemOpen
  \bibfield  {author} {\bibinfo {author} {\bibfnamefont {I.~B.}\ \bibnamefont
  {Damg{\aa}rd}}, \bibinfo {author} {\bibfnamefont {S.}~\bibnamefont {Fehr}},
  \bibinfo {author} {\bibfnamefont {L.}~\bibnamefont {Salvail}}, \ and\
  \bibinfo {author} {\bibfnamefont {C.}~\bibnamefont {Schaffner}},\ }\href
  {\doibase 10.1109/SFCS.2005.30} {\bibfield  {journal} {\bibinfo  {journal}
  {Proc. 46th IEEE FOCS}\ ,\ \bibinfo {pages} {449 }} (\bibinfo {year}
  {2005})}\BibitemShut {NoStop}%
\bibitem [{\citenamefont {Damg{\aa}rd}\ \emph {et~al.}(2007)\citenamefont
  {Damg{\aa}rd}, \citenamefont {Fehr}, \citenamefont {Renner}, \citenamefont
  {Salvail},\ and\ \citenamefont {Schaffner}}]{damgard07}%
  \BibitemOpen
  \bibfield  {author} {\bibinfo {author} {\bibfnamefont {I.~B.}\ \bibnamefont
  {Damg{\aa}rd}}, \bibinfo {author} {\bibfnamefont {S.}~\bibnamefont {Fehr}},
  \bibinfo {author} {\bibfnamefont {R.}~\bibnamefont {Renner}}, \bibinfo
  {author} {\bibfnamefont {L.}~\bibnamefont {Salvail}}, \ and\ \bibinfo
  {author} {\bibfnamefont {C.}~\bibnamefont {Schaffner}},\ }\href {\doibase
  10.1007/978-3-540-74143-5\_20} {\bibfield  {journal} {\bibinfo  {journal}
  {Proc. 27th CRYPTO}\ ,\ \bibinfo {pages} {360}} (\bibinfo {year}
  {2007})}\BibitemShut {NoStop}%
\bibitem [{\citenamefont {Wehner}\ \emph {et~al.}(2008)\citenamefont {Wehner},
  \citenamefont {Schaffner},\ and\ \citenamefont {Terhal}}]{wehner08a}%
  \BibitemOpen
  \bibfield  {author} {\bibinfo {author} {\bibfnamefont {S.}~\bibnamefont
  {Wehner}}, \bibinfo {author} {\bibfnamefont {C.}~\bibnamefont {Schaffner}}, \
  and\ \bibinfo {author} {\bibfnamefont {B.}~\bibnamefont {Terhal}},\ }\href
  {\doibase 10.1103/PhysRevLett.100.220502} {\bibfield  {journal} {\bibinfo
  {journal} {Phys. Rev. Lett.}\ }\textbf {\bibinfo {volume} {100}},\ \bibinfo
  {pages} {220502} (\bibinfo {year} {2008})}\BibitemShut {NoStop}%
\bibitem [{\citenamefont {K\"{o}nig}\ \emph {et~al.}(2012)\citenamefont
  {K\"{o}nig}, \citenamefont {Wehner},\ and\ \citenamefont
  {Wullschleger}}]{konig12}%
  \BibitemOpen
  \bibfield  {author} {\bibinfo {author} {\bibfnamefont {R.}~\bibnamefont
  {K\"{o}nig}}, \bibinfo {author} {\bibfnamefont {S.}~\bibnamefont {Wehner}}, \
  and\ \bibinfo {author} {\bibfnamefont {J.}~\bibnamefont {Wullschleger}},\
  }\href {\doibase 10.1109/TIT.2011.2177772} {\bibfield  {journal} {\bibinfo
  {journal} {IEEE Trans. Inf. Theory}\ }\textbf {\bibinfo {volume} {58}},\
  \bibinfo {pages} {1962} (\bibinfo {year} {2012})}\BibitemShut {NoStop}%
\bibitem [{\citenamefont {Ng}\ \emph {et~al.}(2012)\citenamefont {Ng},
  \citenamefont {Berta},\ and\ \citenamefont {Wehner}}]{ng12}%
  \BibitemOpen
  \bibfield  {author} {\bibinfo {author} {\bibfnamefont {N.~H.~Y.}\
  \bibnamefont {Ng}}, \bibinfo {author} {\bibfnamefont {M.}~\bibnamefont
  {Berta}}, \ and\ \bibinfo {author} {\bibfnamefont {S.}~\bibnamefont
  {Wehner}},\ }\href {\doibase 10.1103/PhysRevA.86.042315} {\bibfield
  {journal} {\bibinfo  {journal} {Phys. Rev. A}\ }\textbf {\bibinfo {volume}
  {86}},\ \bibinfo {pages} {042315} (\bibinfo {year} {2012})}\BibitemShut
  {NoStop}%
\bibitem [{\citenamefont {Kaniewski}\ \emph {et~al.}(2013)\citenamefont
  {Kaniewski}, \citenamefont {Tomamichel}, \citenamefont {H\"{a}nggi},\ and\
  \citenamefont {Wehner}}]{kaniewski13}%
  \BibitemOpen
  \bibfield  {author} {\bibinfo {author} {\bibfnamefont {J.}~\bibnamefont
  {Kaniewski}}, \bibinfo {author} {\bibfnamefont {M.}~\bibnamefont
  {Tomamichel}}, \bibinfo {author} {\bibfnamefont {E.}~\bibnamefont
  {H\"{a}nggi}}, \ and\ \bibinfo {author} {\bibfnamefont {S.}~\bibnamefont
  {Wehner}},\ }\href {\doibase 10.1109/TIT.2013.2247463} {\bibfield  {journal}
  {\bibinfo  {journal} {IEEE Trans. Inf. Theory}\ }\textbf {\bibinfo {volume}
  {59}},\ \bibinfo {pages} {4687} (\bibinfo {year} {2013})}\BibitemShut
  {NoStop}%
\bibitem [{\citenamefont {Tomamichel}\ and\ \citenamefont
  {H\"{a}nggi}(2013)}]{tomamichel13a}%
  \BibitemOpen
  \bibfield  {author} {\bibinfo {author} {\bibfnamefont {M.}~\bibnamefont
  {Tomamichel}}\ and\ \bibinfo {author} {\bibfnamefont {E.}~\bibnamefont
  {H\"{a}nggi}},\ }\href {\doibase 10.1088/1751-8113/46/5/055301} {\bibfield
  {journal} {\bibinfo  {journal} {J. Phys. A: Math. Theor.}\ }\textbf {\bibinfo
  {volume} {46}},\ \bibinfo {pages} {055301} (\bibinfo {year}
  {2013})}\BibitemShut {NoStop}%
\bibitem [{\citenamefont {Slofstra}(2011)}]{slofstra11}%
  \BibitemOpen
  \bibfield  {author} {\bibinfo {author} {\bibfnamefont {W.}~\bibnamefont
  {Slofstra}},\ }\href {\doibase 10.1063/1.3652924} {\bibfield  {journal}
  {\bibinfo  {journal} {J. Math. Phys.}\ }\textbf {\bibinfo {volume} {52}},\
  \bibinfo {pages} {102202} (\bibinfo {year} {2011})}\BibitemShut {NoStop}%
\end{thebibliography}%
\onecolumngrid
\appendix
\section{Preliminaries}
\label{sec:preliminaries}
For an integer $n$, let $[n] = \{1, 2, \ldots, n\}$. Let $\sH$ denote a finite-dimensional Hilbert space of dimension $d = \dim \sH$ and let $\cH(\sH)$ denote the set of Hermitian operators acting on $\sH$. Let $\cS(\sH)$ denote the set of quantum states on $\sH$: $\rho \in \cS(\sH) \iff \rho \in \cH(\sH), \rho \geq 0, \tr \rho = 1$. A binary observable, $\Gamma \in \cH(\sH)$, is a Hermitian operator which satisfies $- \mathbb{1}_{d} \leq \Gamma \leq \mathbb{1}_{d}$, where $\mathbb{1}_{d}$ denotes the identity matrix of dimension $d$.

Let $\{\Gamma_{j}\}$ be a set of Hermitian, traceless, anti-commuting observables acting on a $d$-dimensional Hilbert space:
\begin{equation*}
\Gamma_{j} = \Gamma_{j}^{\dagger}, \quad \tr \Gamma_{j} = 0 \nbox{and} \{\Gamma_{j}, \Gamma_{k}\} = 2 \delta_{jk} \cdot \mathbb{1}_{d}.
\end{equation*}
Note that such a set can always be found, regardless of the number of observables required, as long as the dimension is high enough (e.g.~by Jordan-Wigner transformation, see~\cite{wehner08} for details). Let us first show that if we build a quantum state out of these operators then a simple Bloch-sphere-type condition holds.
\begin{lem}
Let $x$ be a real vector. The operator
\begin{equation*}
\rho = \frac{1}{d} \cdot \Big( \mathbb{1}_{d} + \sum_{j} x_{j} \Gamma_{j} \Big)
\end{equation*}
corresponds to a valid quantum state iff $\sum_{j} x_{j}^{2} \leq 1$.
\end{lem}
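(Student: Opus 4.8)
The plan is to verify the three defining properties of a quantum state in turn, noting that two of them come essentially for free and isolating positivity as the only substantive requirement. Since the $x_{j}$ are real and each $\Gamma_{j}$ is Hermitian, $\rho$ is manifestly Hermitian. Likewise, using $\tr \Gamma_{j} = 0$ and $\tr \mathbb{1}_{d} = d$, we get $\tr \rho = \frac{1}{d}\big(d + \sum_{j} x_{j} \tr \Gamma_{j}\big) = 1$ immediately. So the entire content of the lemma lies in the equivalence between $\rho \geq 0$ and $\sum_{j} x_{j}^{2} \leq 1$.

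To handle positivity, I would first analyse the spectrum of the traceless part $A := \sum_{j} x_{j} \Gamma_{j}$. The key computation is to square it: writing $A^{2} = \sum_{j,k} x_{j} x_{k} \Gamma_{j} \Gamma_{k}$ and symmetrising, the off-diagonal terms combine into anti-commutators $\{\Gamma_{j}, \Gamma_{k}\}$, which vanish for $j \neq k$, while the diagonal terms use $\Gamma_{j}^{2} = \mathbb{1}_{d}$. This collapses the whole sum to a scalar, $A^{2} = \big(\sum_{j} x_{j}^{2}\big) \mathbb{1}_{d}$. Setting $\lambda = \sqrt{\sum_{j} x_{j}^{2}}$, we conclude that $A$ is a Hermitian operator whose square is $\lambda^{2} \mathbb{1}_{d}$, so its spectrum is contained in $\{-\lambda, +\lambda\}$.

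Next I would pin down the minimum eigenvalue of $A$ exactly. Knowing only that $\spec(A) \subseteq \{\pm \lambda\}$ is not quite enough for the ``only if'' direction; I also need $-\lambda$ to actually occur in the spectrum whenever $\lambda > 0$. This is where tracelessness enters: since $\tr A = \sum_{j} x_{j} \tr \Gamma_{j} = 0$ and the only possible eigenvalues are $\pm \lambda$, the value $-\lambda$ must appear (with the same multiplicity as $+\lambda$). Hence the smallest eigenvalue of $A$ equals $-\lambda$, and the smallest eigenvalue of $\rho = \frac{1}{d}(\mathbb{1}_{d} + A)$ is exactly $\frac{1 - \lambda}{d}$. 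I expect this small point\,---\,upgrading the containment $\spec(A) \subseteq \{\pm \lambda\}$ to the genuine presence of $-\lambda$\,---\,to be the only place where a careless argument could go wrong.

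Finally, positivity of $\rho$ is equivalent to $\frac{1 - \lambda}{d} \geq 0$, i.e.~$\lambda \leq 1$, i.e.~$\sum_{j} x_{j}^{2} \leq 1$. This yields both directions of the iff at once: if $\sum_{j} x_{j}^{2} \leq 1$ then $\lambda \leq 1$ and every eigenvalue of $\rho$ is non-negative, so $\rho$ is a valid state; conversely a valid state forces $\rho \geq 0$, hence $\lambda \leq 1$. The degenerate case $\lambda = 0$, where $A = 0$ and $\rho = \mathbb{1}_{d}/d$, is consistent with the bound and causes no trouble.
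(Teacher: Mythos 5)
Your proof is correct and follows essentially the same route as the paper's: square the traceless part $A = \sum_j x_j \Gamma_j$ to get $A^2 = \big(\sum_j x_j^2\big)\mathbb{1}_d$, deduce the spectrum is $\{\pm\lambda\}$, and reduce positivity of $\rho$ to $\lambda \leq 1$. The paper packages the spectral fact as $F = \lambda(2P - \mathbb{1}_d)$ with $P$ a rank-$d/2$ projector; your explicit use of tracelessness to guarantee that $-\lambda$ genuinely occurs is exactly the point that justifies that representation, so the two arguments coincide in substance.
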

\begin{proof}
Clearly, $\rho$ is Hermitian and of unit trace, hence, we just need to verify that it is also positive semi-definite. Let $F = \sum_{j} x_{j} \Gamma_{j}$ and note that $F^{2}$ is proportional to $\mathbb{1}_{d}$. Therefore, $F$ can be written as
\begin{equation*}
F = \big( \sum_{j} x_{j}^{2} \big)^{1/2} \cdot (2 P - \mathbb{1}_{d}),
\end{equation*}
where $P$ is a $d/2$-dimensional projector, $\tr P = d/2$ (note that this implies that $d$ must be even). Clearly, $\rho \geq 0$ is equivalent to $\mathbb{1}_{d} + F \geq 0$, which is satisfied iff $\sum_{j} x_{j}^{2} \leq 1$.
\end{proof}

\section{The ellipsoid condition}
\label{sec:ellipsoid}
Suppose we are given a state, $\rho$, and a set of $M$ binary observables, $\{A_{j}\}_{j \in [M]}$. Let $g$ be the (column) vector of expectation values, $g_{j} = \ave{A_{j}}$, and let $T$ be the anti-commutation matrix
\begin{equation*}
T_{jk} =
\begin{cases}
\ave{A_{j}^{2}} &\nbox{if} j = k,\\
\ave{\{A_{j}, A_{k} \}}/2 &\nbox{otherwise.}
\end{cases}
\end{equation*}
\begin{lem}
Any valid combination of $g$ and $T$ satisfies $g g \tran \leq T$.
\end{lem}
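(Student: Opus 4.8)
The plan is to reduce the operator inequality $g g \tran \leq T$ to the family of scalar inequalities $a \tran g g \tran a \leq a \tran T a$, one for each real vector $a$, and to obtain each scalar inequality from Cauchy-Schwarz. This reduction is exactly the characterisation of positive semi-definiteness: a real symmetric matrix $N$ satisfies $N \geq 0$ if and only if $a \tran N a \geq 0$ for every real $a$, so proving the scalar inequality for all $a$ is equivalent to the claimed operator inequality (applied to $N = T - g g \tran$).

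First I would fix an arbitrary real vector $a = (a_{1}, \ldots, a_{M})$ and form the Hermitian operator $K = \sum_{j} a_{j} A_{j}$. Expanding $K^{2}$ and symmetrising the off-diagonal terms gives
\begin{equation*}
K^{2} = \sum_{j} a_{j}^{2} A_{j}^{2} + \frac{1}{2} \sum_{j \neq k} a_{j} a_{k} \{A_{j}, A_{k}\}.
\end{equation*}
Taking the expectation value in the state $\rho$ and inserting the definition of $T$ (diagonal entries $T_{jj} = \ave{A_{j}^{2}}$, off-diagonal entries $T_{jk} = \ave{\{A_{j}, A_{k}\}}/2$) collapses the right-hand side to exactly $\ave{K^{2}} = a \tran T a$. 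I would also record that $\ave{K} = \sum_{j} a_{j} g_{j} = a \tran g$.

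The key step is Cauchy-Schwarz in the Hilbert-Schmidt inner product. Setting $X = K \sqrt{\rho}$ and $Y = \sqrt{\rho}$ (legitimate since $K$ is Hermitian and $\rho \geq 0$), I would evaluate the three traces: $\tr(X^{\dagger} Y) = \tr(K \rho) = \ave{K} = a \tran g$, then $\tr(X^{\dagger} X) = \tr(K^{2} \rho) = \ave{K^{2}} = a \tran T a$, and finally $\tr(Y^{\dagger} Y) = \tr \rho = 1$. The inequality $[\tr(X^{\dagger} Y)]^{2} \leq \tr(X^{\dagger} X) \tr(Y^{\dagger} Y)$ then reads $(a \tran g)^{2} \leq a \tran T a$, i.e.\ $a \tran g g \tran a \leq a \tran T a$. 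Since $a$ was arbitrary, the operator inequality $g g \tran \leq T$ follows.

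The main point requiring care\,---\,rather than a genuine obstacle\,---\,is the bookkeeping in the Cauchy-Schwarz step: one must check that $K$ is genuinely Hermitian, so that $X^{\dagger} = \sqrt{\rho} K$ and $\tr(X^{\dagger} Y)$ evaluates to the real quantity $\ave{K}$ rather than a complex number, and that the diagonal contribution $\sum_{j} a_{j}^{2} \ave{A_{j}^{2}}$ is correctly absorbed into the diagonal of $T$. This last point is precisely where the generalised-measurement version differs from the projective case of the main text: there $A_{j}^{2} = \mathbb{1}$ forces $T_{jj} = 1$, whereas keeping $T_{jj} = \ave{A_{j}^{2}}$ is what allows the identical computation to cover non-projective binary observables.
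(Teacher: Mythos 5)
Your proof is correct and follows essentially the same route as the paper's: the same operator $K = \sum_{j} a_{j} A_{j}$, the same expansion of $K^{2}$ into the anti-commutation matrix, and the same Cauchy--Schwarz application with $X = K\sqrt{\rho}$ and $Y = \sqrt{\rho}$, quantified over all real $a$ to lift the scalar inequality to $g g \tran \leq T$. Your added remarks on Hermiticity of $K$ and on keeping $T_{jj} = \ave{A_{j}^{2}}$ for non-projective observables are accurate and consistent with the generalised statement in the Supplemental Material.
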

\begin{proof}
Let $a$ be an arbitrary real unit vector, $a \in [-1, 1]^{M}$, and let $K = \sum_{j} a_{j} A_{j}$.
Then
\begin{equation*}
K^{2} = \sum_{j} a_{j}^{2} A_{j}^{2} + \frac{1}{2} \sum_{j \neq k} a_{j} a_{k} \{A_{j}, A_{k}\}.
\end{equation*}
Consider Hermitian operators $X = K \sqrt{\rho}$ and $Y = \sqrt{\rho}$. Note that $\tr(X^{\dagger} Y) = \tr (K \rho) = \sum_{j} a_{j} g_{j} = a \tran g$, $\tr (X^{\dagger} X) = \tr( K^{2} \rho) = a \tran T a$ and $\tr (Y^{\dagger} Y) = \tr (\rho) = 1$. Therefore, the Cauchy-Schwarz inequality, $[\tr (X^{\dagger} Y)]^{2} \leq \tr (X^{\dagger} X) \cdot \tr (Y^{\dagger} Y)$, implies that
\begin{equation*}
a \tran g g \tran a \leq a \tran T a.
\end{equation*}
Since this inequality holds for all choices of $a$, it is equivalent to the operator inequality
\begin{equation*}
g g \tran \leq T.
\end{equation*}
\end{proof}
\begin{lem}
Let $T$ be a $M \times M$ real, positive semi-definite matrix and let $g \in [-1, 1]^{M}$ be a real vector such that $g g \tran \leq T$. Then, there exists a quantum state and measurements that give $g$ as the vector of expectation values and $T$ as the anti-commutation matrix.
\end{lem}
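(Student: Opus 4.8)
The plan is to realise the pair $(g,T)$ explicitly using the Jordan--Wigner operators of Section~\ref{sec:preliminaries}. Since $T$ is real and positive semi-definite, I would first take a Gram decomposition $T = V\tran V$, where $V$ is a real $N \times M$ matrix whose columns $v_{1}, \dots, v_{M} \in \mathbb{R}^{N}$ satisfy $v_{j}\tran v_{k} = T_{jk}$ (here $N = \rk T \leq M$). I would then fix traceless, Hermitian, pairwise anti-commuting observables $\{\Gamma_{i}\}_{i \in [N]}$ on a Hilbert space of dimension $d = 2^{\lceil N/2 \rceil}$, as in Section~\ref{sec:preliminaries}, and define candidate observables $A_{j} = \sum_{i} V_{ij}\, \Gamma_{i}$.

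With this choice the Clifford algebra reproduces $T$ automatically. Using $\{\Gamma_{i}, \Gamma_{i'}\} = 2\delta_{ii'}\mathbb{1}$ one computes $\tfrac{1}{2}\{A_{j}, A_{k}\} = (v_{j}\tran v_{k})\, \mathbb{1} = T_{jk}\, \mathbb{1}$, so every anti-commutator is a multiple of the identity and yields the desired entry of $T$ on \emph{any} state; in particular $A_{j}^{2} = T_{jj}\, \mathbb{1}$. Provided the diagonal entries obey $T_{jj} \leq 1$ (which is necessary for $T$ to arise as an anti-commutation matrix of binary observables at all, and holds with equality in the projective case), the eigenvalues of $A_{j}$ lie in $[-1, 1]$ and each $A_{j}$ is a genuine binary observable.

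It remains to choose the state so as to reproduce $g$. Following the first lemma of Section~\ref{sec:preliminaries}, I would look for $\rho = \tfrac{1}{d}\big( \mathbb{1} + \sum_{i} x_{i} \Gamma_{i} \big)$ with a real vector $x$ obeying $\norm{x}^{2} \leq 1$. A direct trace computation gives $\ave{\Gamma_{i}} = x_{i}$, hence $g_{j} = \ave{A_{j}} = (V\tran x)_{j}$, so I need a vector $x$ in the unit ball solving $V\tran x = g$. The natural candidate is the minimum-norm solution $x = V T^{+} g$ (with $T^{+}$ the pseudoinverse), for which $V\tran x = T T^{+} g$ and $\norm{x}^{2} = g\tran T^{+} g$.

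The crux of the argument is showing that this $x$ is admissible, and this is exactly where the hypothesis $g g\tran \leq T$ enters. I would argue via the Schur complements of the block matrix $\bigl(\begin{smallmatrix} 1 & g\tran \\ g & T \end{smallmatrix}\bigr)$: since $T \geq 0$, positivity of this block is equivalent both to $T - g g\tran \geq 0$ and to the conjunction of $g \in \operatorname{range} T$ and $1 - g\tran T^{+} g \geq 0$. The range condition (which can also be seen directly, since any $a \in \ker T$ forces $(a\tran g)^{2} \leq a\tran T a = 0$) guarantees $T T^{+} g = g$, so $V\tran x = g$; and $g\tran T^{+} g \leq 1$ guarantees $\norm{x}^{2} \leq 1$, so $\rho$ is a valid state. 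Thus $(\rho, \{A_{j}\})$ realises $g$ and $T$, completing the proof. The main obstacle is precisely this translation of the operator inequality $g g\tran \leq T$ into the scalar solvability condition $g\tran T^{+} g \leq 1$; once the Schur-complement equivalence is in hand, the rest is bookkeeping.
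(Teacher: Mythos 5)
Your proof is correct and takes essentially the same route as the paper's own: a Gram factorization of $T$ (your $V$ is just the transpose of the paper's $R$), observables $A_{j} = \sum_{i} V_{ij} \Gamma_{i}$ built from the anti-commuting $\Gamma_{i}$ of Section~\ref{sec:preliminaries}, and the Bloch-type state $\rho = \frac{1}{d}(\mathbb{1} + \sum_{i} x_{i}\Gamma_{i})$ --- and your side observation that $T_{jj} \leq 1$ is implicitly required for the $A_{j}$ to be binary observables is also right (the paper leaves this tacit). The only divergence is the final admissibility check: the paper picks a left inverse $Q$ of $R$, sets $x = Qg$, and concludes in one line via $x x\tran = Q g g\tran Q\tran \leq Q T Q\tran = \mathbb{1}_{r}$, whereas your pseudoinverse/Schur-complement verification is slightly longer but has the virtue of making explicit the range condition $g \in \operatorname{range} T$ (forced by $g g\tran \leq T$), which the paper silently relies on when asserting that $x = Qg$ actually solves $Rx = g$.
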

\begin{proof}
Since $T \geq 0$ there exists a $M \times r$ real matrix $R$, such that $R R \tran = T$ and $r = \rk(T)$. Let the $j$-th observable be
\begin{gather*}
A_{j} = \sum_{i = 1}^{r} R_{ji} \Gamma_{i},
\end{gather*}
which implies that $\{A_{j}, A_{k}\} = 2 T_{jk} \cdot \mathbb{1}_{d}$. Therefore, the anti-commutation matrix is reproduced correctly independent of the state.

Consider an operator defined as
\begin{equation*}
\rho = \frac{1}{d} \cdot \Big( \mathbb{1}_{d} + \sum_{j = 1}^{r} x_{j} \Gamma_{j} \Big).
\end{equation*}
It is easy to verify that if $\rho$ corresponds to a valid state then the resulting vector of expectation values equals $g = R x$. Since $\rk(R) = r$, $R$ has a left inverse, namely a $r \times M$ matrix $Q$ such that $Q R = \mathbb{1}_{r}$, and $x$ can be calculated as $x = Q g$. To verify that $\rho$ corresponds to a valid state we must check that $x \tran x \leq 1$ which follows directly from the fact that
\begin{equation*}
x x \tran = Q g g \tran Q \tran \leq Q T Q \tran = Q R R \tran Q \tran = \mathbb{1}_{r},
\end{equation*}
where we used the assumption $g g \tran \leq T$.
\end{proof}
As a corollary we obtain a lower bound on the dimension of the system necessary to reproduce a particular choice of $g$ and $T$.
\begin{cor}
To reproduce correctly $g$ and $T$ it is sufficient to use $r = \rk(T)$ anti-commuting observables which can be realised in dimension $d = 2^{\lceil \frac{r - 1}{2} \rceil}$.
\end{cor}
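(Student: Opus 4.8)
The first assertion requires almost no work: the plan is to read it off the construction in the preceding lemma. There each observable was written as $A_{j} = \sum_{i=1}^{r} R_{ji}\Gamma_{i}$, a linear combination of exactly $r = \rk(T)$ anti-commuting generators $\Gamma_{1},\dots,\Gamma_{r}$, and this was shown to reproduce both $g$ and $T$. Hence $r=\rk(T)$ anti-commuting observables suffice, and the only remaining content is the dimension bound, i.e.\ that a family of $r$ traceless, Hermitian, pairwise anti-commuting involutions (operators with $\{\Gamma_{j},\Gamma_{k}\}=2\delta_{jk}\mathbb{1}$) can be realised on a space of dimension $d = 2^{\lceil (r-1)/2 \rceil}$.

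To get the dimension bound I would exhibit such generators explicitly by an inductive Pauli (Jordan--Wigner) construction and then count. The claim I would prove by induction on $k$ is: on $(\mathbb{C}^{2})^{\otimes k}$, i.e.\ in dimension $2^{k}$, there exist $2k+1$ pairwise anti-commuting traceless involutions. The base case $k=1$ is just the three Pauli matrices $\sigma_{X},\sigma_{Y},\sigma_{Z}$ on a qubit, which pairwise anti-commute, square to $\mathbb{1}_{2}$, and are traceless. For the inductive step, given $n = 2k+1$ such operators $\{\gamma_{i}\}$ in dimension $2^{k}$, I would pass to dimension $2^{k+1}$ via the $n+2$ operators $\gamma_{i}\otimes\sigma_{X}$ (for $i=1,\dots,n$) together with $\mathbb{1}\otimes\sigma_{Y}$ and $\mathbb{1}\otimes\sigma_{Z}$. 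A short verification using $\{\sigma_{X},\sigma_{Y}\}=\{\sigma_{Y},\sigma_{Z}\}=\{\sigma_{Z},\sigma_{X}\}=0$ and $\{\gamma_{i},\gamma_{j}\}=2\delta_{ij}\mathbb{1}$ shows these $n+2 = 2(k+1)+1$ operators again pairwise anti-commute, square to the identity, and are traceless, closing the induction.

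With this count in hand, to accommodate $r$ generators it suffices to take the smallest $k$ with $2k+1 \geq r$, namely $k = \lceil (r-1)/2 \rceil$, and to keep only $r$ of the $2k+1$ operators the construction produces. This yields the claimed dimension $d = 2^{\lceil (r-1)/2 \rceil}$ and finishes the corollary.

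I do not expect a serious obstacle here: the genuinely content-bearing step is the inductive construction (equivalently, the standard fact that the number of anti-commuting traceless involutions available in dimension $2^{k}$ grows like $2k+1$), and everything else is bookkeeping. The only mild subtlety I anticipate is the small-$r$ boundary, e.g.\ $r=1$, where the formula returns $d=1$ but a single traceless involution in fact needs a qubit; this is harmless since it can only force a slightly larger dimension than stated, and the cases of interest have $r\geq 2$.
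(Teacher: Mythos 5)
Your proposal is correct and follows essentially the same route as the paper, which gives no explicit proof of the corollary: it reads the count $r = \rk(T)$ off the preceding lemma's construction $A_{j} = \sum_{i=1}^{r} R_{ji} \Gamma_{i}$ and relies on the standard Jordan--Wigner realisation of $2k+1$ pairwise anti-commuting, traceless, Hermitian involutions in dimension $2^{k}$ (cited in the preliminaries rather than reproved), which is exactly what your induction constructs. Your caveat about the boundary case is also apt: for $r = 1$ the formula indeed gives $d = 1$ while a traceless involution requires a qubit, a minor slip in the corollary's statement itself rather than a gap in your argument.
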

\section{Convexity/concavity of $w_{\alpha}(\sqrt{t})$}
\label{sec:conv-conc}
For completeness recall the definition of $w_{\alpha}(x)$ for $x \in [-1, 1]$:
\begin{equation*}
w_{\alpha}(x) = \Big[ \Big( \frac{1 + x}{2} \Big)^{\alpha} + \Big( \frac{1 - x}{2} \Big)^{\alpha} \Big]^{1/\alpha}.
\end{equation*}
\begin{lem}
The function $w_{\alpha}(\sqrt{t})$ for $t \in [0, 1]$ is convex for $\alpha \in (1, \frac{3}{2}]$ and concave for $\alpha \in [2, \infty)$.
\end{lem}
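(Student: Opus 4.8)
The plan is to reduce the statement to a one–dimensional sign analysis of the second derivative. Write $f(t) = w_{\alpha}(\sqrt{t})$ and $x = \sqrt{t}$; then convexity (resp.\ concavity) on $[0,1]$ is equivalent to $f''(t) \geq 0$ (resp.\ $\leq 0$) on $(0,1)$, the endpoints following by continuity. A single application of the chain rule gives $f''(t) = \frac{1}{4x^{3}}\big(x\,w_{\alpha}''(x) - w_{\alpha}'(x)\big)$, so for $x > 0$ the sign of $f''$ is exactly the sign of $\eta(x) := x\,w_{\alpha}''(x) - w_{\alpha}'(x)$. The whole problem thus becomes: show $\eta \geq 0$ on $(0,1)$ for $\alpha \in (1,\tfrac{3}{2}]$, and $\eta \leq 0$ on $(0,1)$ for $\alpha \in [2,\infty)$.

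Next I would compute $\eta$ explicitly. Setting $p = \frac{1+x}{2}$, $q = \frac{1-x}{2}$ and $S = p^{\alpha}+q^{\alpha}$, so that $w_{\alpha} = S^{1/\alpha}$, one differentiates using $S' = \frac{\alpha}{2}(p^{\alpha-1}-q^{\alpha-1})$ and $S'' = \frac{\alpha(\alpha-1)}{4}(p^{\alpha-2}+q^{\alpha-2})$ and extracts the positive factors. The sign of $\eta$ then equals the sign of
\begin{equation*}
\Psi = (\alpha-1)(p-q)\big[(p^{\alpha}+q^{\alpha})(p^{\alpha-2}+q^{\alpha-2}) - (p^{\alpha-1}-q^{\alpha-1})^{2}\big] - 2(p^{\alpha}+q^{\alpha})(p^{\alpha-1}-q^{\alpha-1}).
\end{equation*}
The key simplification\,---\,the step that makes the argument clean\,---\,is that, since $p+q=1$, the bracket collapses: $(p^{\alpha}+q^{\alpha})(p^{\alpha-2}+q^{\alpha-2}) - (p^{\alpha-1}-q^{\alpha-1})^{2} = (pq)^{\alpha-2}(p+q)^{2} = (pq)^{\alpha-2}$. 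Hence $\Psi = (\alpha-1)(p-q)(pq)^{\alpha-2} - 2(p^{\alpha}+q^{\alpha})(p^{\alpha-1}-q^{\alpha-1})$, a difference of two terms that are each nonnegative for $x\in[0,1)$ and $\alpha>1$, so everything reduces to comparing their magnitudes.

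Finally I would homogenise by introducing $s = q/p \in [0,1]$ and factoring out the positive power $p^{2\alpha-3}$; clearing the further positive factors $(1+s)^{2}$ and $s^{2-\alpha}$ turns $\Psi \geq 0$ (resp.\ $\leq 0$) into $\hat{G}(s) \geq 0$ (resp.\ $\leq 0$), where $\hat{G}(s) = (\alpha-1)(1-s^{3}) + (\alpha+1)(s-s^{2}) + 2(s^{\alpha+1} - s^{2-\alpha})$. It remains to pin down the global sign of $\hat G$ on $[0,1]$. The endpoint $s=1$ (i.e.\ $t=0$) is a degenerate zero: one checks $\hat G(1)=\hat G'(1)=\hat G''(1)=0$ while $\hat G'''(1) = 2(\alpha-1)(\alpha+1)(2\alpha-3)$, whose sign is controlled by $2\alpha-3$\,---\,precisely the source of the threshold at $\alpha=\tfrac{3}{2}$, and consistent with the local behaviour of $f''$ at $t=0$.

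The remaining and genuinely technical point\,---\,the main obstacle\,---\,is to exclude interior sign changes of $\hat G$ on $(0,1)$. Here I would write $\hat G = P - Q$ with $P(s) = (1-s)\big[(\alpha-1)(1+s)^{2}+2s\big] > 0$ and $Q(s) = 2\big(s^{2-\alpha}-s^{\alpha+1}\big) = 2s^{2-\alpha}(1-s^{2\alpha-1}) > 0$ on $(0,1)$, and aim to show that the ratio $P/Q$ is \emph{strictly monotone} there. Since both $P$ and $Q$ vanish to first order at $s=1$ with $\lim_{s\to 1^{-}} P/Q = 1$, strict monotonicity forces $P/Q-1$ to have constant sign on all of $(0,1)$, with that sign determined by the already-computed local behaviour near $s=1$ (the $2\alpha-3$ factor); this immediately yields $\hat G\geq 0$ for $\alpha\in(1,\tfrac32]$ and $\hat G\leq 0$ for $\alpha\in[2,\infty)$. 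Controlling the two non-integer powers $s^{2-\alpha}$ and $s^{\alpha+1}$ in the monotonicity claim for $P/Q$ is the crux of the calculation, and I would expect to handle it by examining the logarithmic derivative $(\log P)' - (\log Q)'$ and bounding it in each of the two $\alpha$-regimes separately.
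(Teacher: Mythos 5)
Your reduction is correct, and it is worth noting that it lands exactly where the paper does: under the substitution $s = e^{-2x}$ (equivalently $s = q/p = (1-\sqrt{t})/(1+\sqrt{t})$), your
\begin{equation*}
\hat{G}(s) = (\alpha-1)(1-s^{3}) + (\alpha+1)(s-s^{2}) + 2\big(s^{\alpha+1} - s^{2-\alpha}\big)
\end{equation*}
is precisely $4e^{-3x}$ times the paper's quantity $\frac{\alpha-1}{2}\sinh 3x + \frac{1+\alpha}{2}\sinh x - \sinh[x(2\alpha-1)]$, so your algebraic route via $\Psi$ and the collapse identity $(p^{\alpha}+q^{\alpha})(p^{\alpha-2}+q^{\alpha-2}) - (p^{\alpha-1}-q^{\alpha-1})^{2} = (pq)^{\alpha-2}$ (which I verified) is a legitimate, arguably cleaner, derivation of the same core object. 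The problem is what comes after. The decisive step\,---\,ruling out interior sign changes of $\hat{G}$ on $(0,1)$\,---\,is exactly where you stop proving and start planning: the strict monotonicity of $P/Q$ is asserted, with only the remark that you ``would expect to handle it'' via the logarithmic derivative. That claim is the entire content of the lemma at this point (numerical spot checks suggest it is true, but nothing in your write-up establishes it), and bounding $(\log P)' - (\log Q)'$ with the two non-integer powers $s^{2-\alpha}$ and $s^{\alpha+1}$, uniformly over the unbounded range $\alpha \in [2,\infty)$, is not a routine calculation that can be waved through. As submitted, the proof has a genuine gap at its crux.

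There is also a smaller flaw in your endgame: you propose to read off the global sign from the local behaviour at $s=1$ via $\hat{G}'''(1) = 2(\alpha-1)(\alpha+1)(2\alpha-3)$, but this vanishes at the boundary value $\alpha = \frac{3}{2}$, which is included in the statement; there the zero at $s=1$ is of fifth order (indeed $\hat{G}(s) = \frac{1}{2}(1-\sqrt{s})^{5}(1+\sqrt{s})$ at $\alpha = \frac{3}{2}$), so that case needs separate treatment. The paper avoids both issues by a different closing move: it Taylor-expands the $\sinh$ combination in $x$ and shows all coefficients $c_{k}(\alpha) = \frac{1}{2k!}\big[(\alpha-1)3^{k} + 1 + \alpha - 2(2\alpha-1)^{k}\big]$ share one sign on each $\alpha$-range, by factoring $c_{k}(\alpha) = \frac{\alpha-1}{2k!}p_{k}(\alpha)$ with $p_{k}$ monotonically decreasing in $\alpha \geq 1$ and checking signs at $\alpha = \frac{3}{2}$ and $\alpha = 2$. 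Since all coefficients of $\sinh$ in its power series are positive, a fixed sign of every $c_{k}$ settles the sign of the whole function at once, with no ratio-monotonicity needed and with $\alpha = \frac{3}{2}$ and the unbounded range handled uniformly. Given that your $\hat{G}$ is the same function in disguise, the most economical fix is to graft this coefficient argument onto your derivation; alternatively, supply an actual proof of the $P/Q$ monotonicity, split by regime, which you have not yet done.
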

\begin{proof}
Let us write $w_{\alpha}(\sqrt{t})$ as
\begin{gather*}
w_{\alpha}(\sqrt{t}) = \frac{1}{2} \big[ g_{\alpha}(t) \big]^{1/\alpha},\\
\nbox{where} g_{\alpha}(t) = (1 + \sqrt{t})^{\alpha} + (1 - \sqrt{t})^{\alpha}.
\end{gather*}
Calculating the derivatives gives
\begin{gather*}
\frac{d}{dt} w_{\alpha}(\sqrt{t}) = \frac{1}{2 \alpha} \cdot g_{\alpha}(t)^{(1 - \alpha)/\alpha} \cdot g' _{\alpha}(t),\\
\frac{d^{2}}{dt^{2}} w_{\alpha}(\sqrt{t}) = \frac{1 - \alpha}{2\alpha^{2}} \cdot g_{\alpha}(t)^{(1 - 2\alpha)/\alpha} \cdot \big[ g' _{\alpha}(t) \big]^{2} + \frac{1}{2\alpha} \cdot g_{\alpha}(t)^{(1 - \alpha)/\alpha} \cdot g'' _{\alpha}(t)\\
= \frac{g_{\alpha}(t)^{(1 - 2\alpha)/\alpha}}{2\alpha^{2}} \cdot \Big[ (1 - \alpha) \cdot \big[ g' _{\alpha}(t) \big]^{2} + \alpha \cdot g_{\alpha}(t) \cdot g'' _{\alpha}(t) \Big].
\end{gather*}
Therefore, what we are interested in is the sign of
\begin{equation}
\label{eq:h-expression}
h_{\alpha}(t) = \frac{1 - \alpha}{\alpha^{2}} \cdot \big[ g' _{\alpha}(t) \big]^{2} + \frac{1}{\alpha} \cdot g_{\alpha}(t) \cdot g''_{\alpha}(t).
\end{equation}
It is easy to verify that
\begin{gather*}
g'_{\alpha}(t) = \frac{\alpha}{2 \sqrt{t}} \Big[ (1 + \sqrt{t})^{\alpha - 1} - (1 - \sqrt{t})^{\alpha - 1} \Big],\\
g''_{\alpha}(t) = \frac{\alpha (\alpha - 1)}{4 t} g_{\alpha - 2}(t) - \frac{g'_{\alpha}(t)}{2t}.
\end{gather*}
Expanding the terms gives
\begin{align*}
\frac{1 - \alpha}{\alpha^{2}} \cdot \big[ g' _{\alpha}(t) \big]^{2} &= \frac{1 - \alpha}{4t} \cdot \Big[ (1 + \sqrt{t})^{2 (\alpha - 1)} + (1 - \sqrt{t})^{2 (\alpha - 1)} - 2 (1 - t)^{\alpha - 1} \Big],\\
\frac{1}{\alpha} \cdot g_{\alpha}(t) \cdot g''_{\alpha}(t) &= \frac{\alpha - 1}{4t} \cdot g_{\alpha}(t) g_{\alpha - 2}(t) - \frac{1}{2 \alpha t} g_{\alpha}(t) g'_{\alpha}(t)\\
&= \frac{\alpha - 1}{4t} \cdot \Big[ (1 + \sqrt{t})^{2 (\alpha - 1)} + (1 - \sqrt{t})^{2 (\alpha - 1)} + 2 (1 + t) (1 - t)^{\alpha - 2} \Big]\\
&- \frac{1}{4 t \sqrt{t}} \cdot \Big[ (1 + \sqrt{t})^{2 \alpha - 1} - (1 - \sqrt{t})^{2 \alpha - 1} - 2 \sqrt{t} (1 - t)^{\alpha - 1} \Big].
\end{align*}
Therefore,
\begin{equation*}
h_{\alpha}(t) = \frac{1}{t} \cdot \bigg( (1 - t)^{\alpha - 2} \Big[ \alpha - \frac{1 + t}{2} \Big] - \frac{1}{4 \sqrt{t}} \Big[ (1 + \sqrt{t})^{2 \alpha - 1} - (1 - \sqrt{t})^{2 \alpha - 1} \Big] \bigg).
\end{equation*}
Since we are only interested in the sign of~\eqref{eq:h-expression}, we consider
\begin{equation}
\label{eq:simplified-expression}
2 \alpha - 1 - t - \frac{(1 - t)^{3/2}}{2 \sqrt{t}} \bigg[ \Big( \frac{1 + \sqrt{t}}{1 - \sqrt{t}} \Big)^{\alpha - 1/2} - \Big( \frac{1 - \sqrt{t}}{1 + \sqrt{t}} \Big)^{\alpha - 1/2} \bigg].
\end{equation}
Here, it is convenient to introduce hyperbolic functions. Let $e^{2x} = (1 + \sqrt{t})/(1 - \sqrt{t})$, which means that $t \in [0, 1]$ is mapped onto $x \in [0, \infty)$. Then, we have
\begin{equation*}
x = \arctanh \sqrt{t}, \quad t = \tanh^{2} x \nbox{and} 1 - t = \frac{1}{\cosh^{2} x}
\end{equation*}
and~\eqref{eq:simplified-expression} becomes
\begin{align*}
2 \alpha &- 1 - \tanh^{2} x - \frac{\sinh [x(2 \alpha - 1)]}{\sinh x \cdot \cosh^{2} x}\\
&= 2 (\alpha - 1) + \frac{\sinh x - \sinh [x(2 \alpha - 1)]}{\sinh x \cosh^{2} x}.
\end{align*}
Note that $2 \sinh x \cosh^{2} x = \sinh 2x \cosh x = (\sinh 3x + \sinh x)/2$. The sign is the same as the sign of
\begin{equation*}
\frac{\alpha - 1}{2} \sinh 3x + \frac{1 + \alpha}{2} \sinh x - \sinh [x(2 \alpha - 1)],
\end{equation*}
which we can Taylor-expand. Note that this is an odd function and the coefficients are
\begin{equation*}
c_{k}(\alpha) = \frac{1}{2 k!} \Big[ (\alpha - 1) \cdot 3^{k} + 1 + \alpha - 2 \cdot (2 \alpha - 1)^{k} \Big].
\end{equation*}
To show convexity (concavity) it suffices to show that all the coefficients are positive (negative).
Since $c_{k}(\alpha)$ is a polynomial and it vanishes at $\alpha = 1$ it must be divisible by $(\alpha - 1)$.
\begin{gather*}
(2 \alpha - 1)^{k} = \sum_{j = 0}^{k} {k \choose j} (\alpha - 1)^{j} \alpha^{k - j} = \alpha^{k} + (\alpha - 1) \sum_{j = 0}^{k - 1} {k \choose j + 1} (\alpha - 1)^{j} \alpha^{k - j - 1},\\
1 + \alpha - 2 \alpha^{k} = (1 - \alpha) + 2 \alpha (1 - \alpha^{k - 1}) = (1 - \alpha) \Big(1 + 2 \sum_{j = 1}^{k - 1} \alpha^{j} \Big),
\end{gather*}
Putting everything together gives
\begin{gather*}
c_{k}(\alpha) = \frac{\alpha - 1}{2 k!} \cdot p_{k}(\alpha),\\
\nbox{where} p_{k}(\alpha) = 3^{k} - 1 - 2 \sum_{j = 1}^{k - 1} \alpha^{j} - 2 \sum_{j = 0}^{k - 1} {k \choose j + 1} (\alpha - 1)^{j} \alpha^{k - j - 1}.
\end{gather*}
Note that for $\alpha \geq 1$, $p_{k}(\alpha)$ is monotonically decreasing in $\alpha$, so it has at most one zero. Therefore, $c_{k}(\alpha)$ has at most two zeroes (the first one at $\alpha = 1$). By checking
\begin{gather*}
c_{k} \Big( \frac{3}{2} \Big) = \frac{1}{2 k!} \Big(\frac{3^{k} + 5}{2} - 2^{k + 1} \Big) \geq 0,\\
c_{k} (2) = \frac{1}{2 k!} (3 - 3^{k}) \leq 0,
\end{gather*}
we conclude that the other zero is always there and is contained within $\alpha \in (\frac{3}{2}, 2)$. Hence for $\alpha \in (1, \frac{3}{2}] \cup [2, \infty)$ all the coefficients have the same sign which proves convexity/concavity of the original function.
\end{proof}
\section{The certification procedure}
\label{sec:certification-procedure}
This certification procedure assumes that both devices are memoryless, i.e.~every round is identical and independent of each other.

Suppose we are given a measurement device (Alice) with $M$ different settings, which correspond to different binary observables, $\{A_{j}\}_{j \in [M]}$. The goal of the certification procedure is to characterise the anti-commutation matrix $T$, or more specifically the effective pairwise commutators
\begin{equation*}
\varepsilon_{jk} = \frac{1}{2} \ave{\{A_{j}, A_{k}\}} = \frac{1}{2} \tr(\{A_{j}, A_{k}\} \rho).
\end{equation*}
Ideally, since we are interested in large uncertainty, we would like our measurements to exactly anti-commute, i.e.~$\varepsilon_{jk} = 0$ for $j \neq k$.

To perform device-independent certification we need an auxiliary device (Bob), which in our case is a measurement device with $2 \cdot {M \choose 2}$ settings denoted by $B_{jk, t}$, where $j, k \in [M], j \neq k$ and $t \in \{0, 1\}$ that shares entanglement with the first device. Following the procedure proposed by Slofstra~\cite{slofstra11} we estimate the following quantity for all pairs $(j, k)$, $j \neq k$:
\begin{equation*}
\beta_{jk} :=  \ave{A_{j} \otimes (B_{jk, 0} + B_{jk, 1}) + A_{k} \otimes (B_{jk, 0} - B_{jk, 1})}.
\end{equation*}
Since this is clearly equivalent to the CHSH game, we can see the entire procedure as a combination of multiple CHSH subgames in which Alice is not told which subgame she is playing. Therefore, we can apply a standard result from~\cite{tomamichel13a}, which establishes a trade-off between the observed violation and the effective anti-commutator of the observables used by Alice (in fact, the same trade-off applies on Bob's side but since we do not want to certify the auxiliary device we do not need it). More specifically, we have
\begin{equation*}
\abs{\varepsilon_{jk}} \leq \frac{\beta_{jk}}{4} \sqrt{8 - \beta_{jk}^{2}} := c_{jk}.
\end{equation*}
While this does not allow us to find the anti-commutation matrix explicitly, we can place an upper bound on its norm. It is easy to see that $\norm{T} \leq \norm{T'}$, where
\begin{equation*}
T_{jk}' =
\begin{cases}
1 &\nbox{if} j = k,\\
c_{jk} &\nbox{otherwise.}
\end{cases}
\end{equation*}
Therefore, the observed statistics allows us to bound $\norm{T}$, which turns out to be sufficient for our applications.

For completeness, we also provide an explicit description of devices that achieve the maximum violation for all subgames. Suppose Alice and Bob share a maximally entangled state of dimension $d = 2^{\lceil \frac{M - 1}{2} \rceil}$
\begin{equation*}
\ket{\Psi}_{AB} = \frac{1}{\sqrt{d}} \sum_{k = 1}^{d} \ket{k}_{A} \ket{k}_{B}
\end{equation*}
and that their measurements are
\begin{equation*}
A_{j} = \Gamma_{j} \nbox{and} B_{jk, t} = \frac{\Gamma_{j}^{\tran} + (-1)^{t} \Gamma_{k}^{\tran}}{\sqrt{2}},
\end{equation*}
where $\{\Gamma_{j}\}$ is a set of anti-commuting observables acting on $d$-dimensional Hilbert space as defined in Section~\ref{sec:preliminaries}. It is easy to check that for every pair $(j, k), j \neq k$, we obtain
\begin{equation*}
\bramatketq{\Psi}{A_{j} \otimes (B_{jk, 0} + B_{jk, 1}) + A_{k} \otimes (B_{jk, 0} - B_{jk, 1})} = 2 \sqrt{2},
\end{equation*}
which implies $\varepsilon_{jk} = 0$. Hence, we have certified a device that performs $M$ exactly anti-commuting measurements.
\end{document}